\newif\ifcccg
  \newtheorem{theorem}{Theorem}
\newcommand{\IGNORE}[1]{}
\newcommand{\Sei}{Sei Sh\=onagon\protect\xspace}
\newcommand{\game}{\Sei Chie no Ita\protect\xspace}
\title{The Convex Configurations of ``\game'' and Other Dissection Puzzles}
\author{
Eli Fox-Epstein\thanks{Brown University, USA, {\tt ef@cs.brown.edu}}
\and
Ryuhei Uehara\thanks{School of Information Science, JAIST, Japan, {\tt uehara@jaist.ac.jp}}
}
\date{}
\begin{document}
\maketitle

\begin{abstract}
The \emph{tangram} and \emph{\game} are popular dissection puzzles consisting of seven pieces. Each puzzle can be formed by identifying edges from sixteen identical right isosceles triangles.
It is known that the tangram can form 13 convex polygons.
We show that \game can form 16 convex polygons, propose a new puzzle that can form 19,
no 7 piece puzzle can form 20,
and 11 pieces are necessary and sufficient to form
all 20 polygons formable by 16 identical isosceles right triangles. 
Finally, we examine the number of convex polygons formable by different quantities of these triangles.
\end{abstract}

\section{Introduction}

A \emph{dissection puzzle} is a game where one must decide
whether a given set of polygons can be placed in the plane
in such a way that their union is a given target polygon.
Rotation and reflection are allowed but scaling is not,
and all polygons must be internally disjoint.
Formally, a set of polygons $S$ can \emph{form} a polygon $P$
if there is an isomorphism up to rotation and reflection between
a partition of $P$ and the polygons of $S$ (i.e. a bijection $f(\cdot)$ from a partition of $P$ to $S$ such that $x$ and $f(x)$ are congruent for all $x$).

The \emph{tangram} is a set of polygons consisting of a square 
cut by straight incisions into different-sized pieces.
See the left diagram in \figurename~\ref{fig:diagrams}.
Of anonymous origin, the first known reference in literature
is from 1813 in China \cite{Slocum}.
The tangram has grown to be extremely popular throughout the world and now has
over 2000 dissection and related puzzles \cite{Slocum,Gardner}.

\begin{figure}[h]
  \centering
  \includegraphics[width=0.35\textwidth]{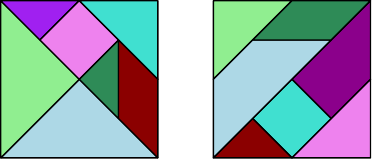}
  \caption{Left: the tangram in square configuration.
           Right: \game pieces in square configuration.}
  \label{fig:diagrams}
\end{figure}

\begin{figure}[t]
  \centering
  \includegraphics[width=0.4\textwidth,clip=true,trim=0 20 100 20]{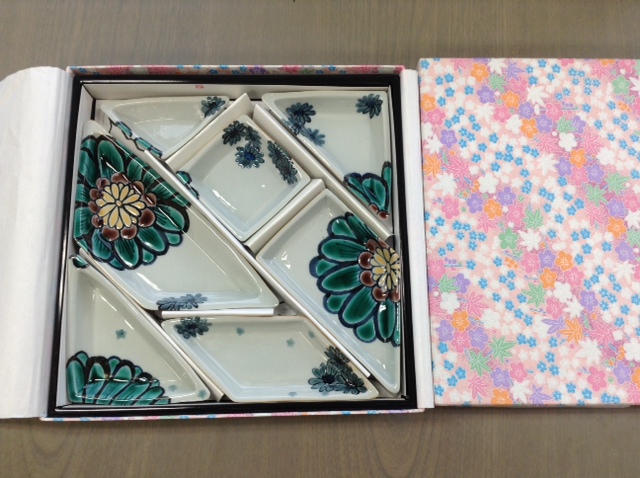}
  \caption{A set of plates in the form of \game pieces, crafted by Tomomi Takeda in Kanazawa, Japan.}
  \label{fig:pottery}
\end{figure}
\begin{figure}[t]
 \centering
  \includegraphics[width=0.2\textwidth]{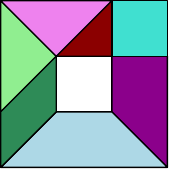}
  \caption{A typical \game layout as a square configuration with a hole missing.}
  \label{fig:hole}
\end{figure}

There is a similar but less famous Japanese set of puzzle pieces called \emph{\game}.
\Sei was a courtier and famous novelist in Japan, but 
there is no evidence that the puzzle existed a millennium ago during her lifetime.
\emph{Chie no ita} means \emph{wisdom plates}, which refers to the physical puzzle.
It is said that the puzzle is named after \Sei's wisdom.
Historically, the \game first appeared in literature in 1742 \cite{Slocum}.
Even in Japan, the tangram is more popular than \game, though
\game is common enough to have been made into ceramic dinner plates (see e.g. \figurename~\ref{fig:pottery}, \cite{pottery}), and in puzzle communities it is admired for being able to form some more interesting shapes that the tangram cannot, such as a square configuration with 
a hole missing (\figurename~\ref{fig:hole}).

Wang and Hsiung considered the number of possible convex
  (filled) polygons formed by the tangram \cite{WH42}.
They first noted that, given sixteen identical isosceles right triangles,
 one can create the tangram pieces by gluing some edges together.
Consequently, the tangram pieces can only form a subset of the convex polygons that sixteen idential isosceles right triangles can form.
  Embedded in the proof of their main theorem, Wang and Hsiung \cite{WH42} demonstrate
  that sixteen identical isosceles right triangles can form exactly 20 convex polygons.
These 20 are illustrated in \figurename~\ref{fig:20}.
The tangram can realize thirteen of them.

\begin{figure*}[h]
  \centering
  \includegraphics[height=0.19\textheight]{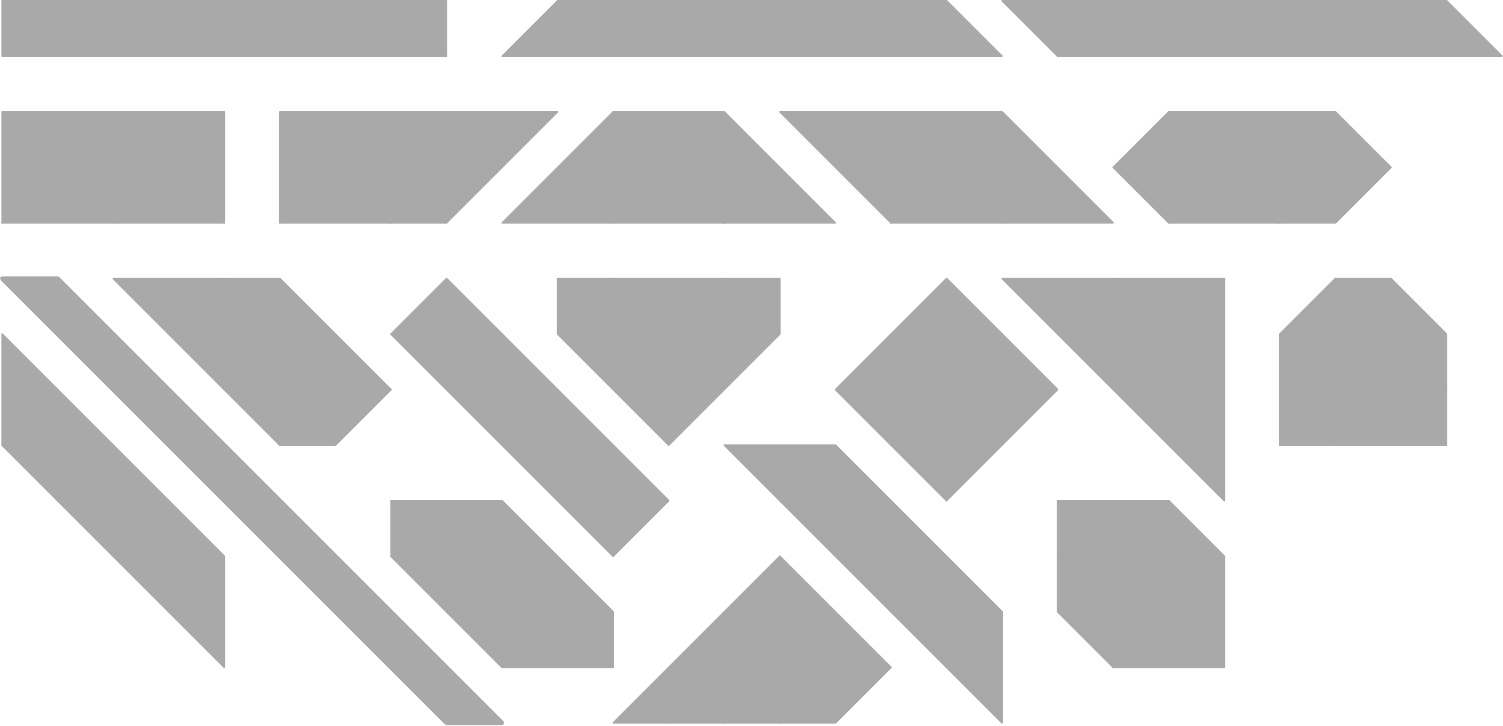}
  \caption{All 20 potential convex polygons.}
  \label{fig:20}
\end{figure*}

It is quite natural to ask how many of these convex polygons 
the \game pieces can form.
We first show that \game achieves sixteen.
Therefore, in a sense, we can conclude that \game is more expressive than the tangram:
while both the tangram and \game contain seven pieces made from sixteen identical
isosceles right triangles, \game can form more convex polygons than the tangram.
(Also, recall that the \game configuration in \figurename~\ref{fig:hole} is impossible with the tangram.)

One might next wonder if this can be improved with different shapes.
We demonstrate a set of seven pieces that can form nineteen convex polygons
among the 20 candidates, and that to realize them all, it
is necessary to have at least eleven shapes, which is sufficient.
Throughout, all triangles mentioned are identical isosceles right triangles with side lengths $1$, $1$, and $\sqrt 2$.

\section{The \game puzzle}

\begin{theorem}
  The \game puzzle pieces can be rearranged into exactly
  sixteen distinct convex polygons up to reflection and rotation.
  \label{thm:main}
\end{theorem}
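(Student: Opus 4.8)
The plan is to convert the problem into a finite determination. I may assume the fact recalled above that each of the seven pieces is obtained by gluing edges of the sixteen identical triangles; consequently any convex polygon the puzzle forms is automatically one of the twenty convex polygons of Wang and Hsiung. So the theorem reduces to identifying \emph{which} subset of these twenty is attainable and showing it has size exactly sixteen. I would separate this into a constructive lower bound (sixteen of the twenty are realizable) and an impossibility upper bound (the remaining four are not), the latter being where the real work lies.

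For the lower bound I would simply exhibit, for each of the sixteen claimed targets, one explicit arrangement of the seven \game pieces, presented as a labelled figure. Verifying each arrangement is then routine bookkeeping: one checks that the pieces are pairwise internally disjoint, that their union is the target, and that each of the seven pieces is used exactly once (so that congruence of the partition with the piece set holds). No cleverness is required here beyond finding the layouts.

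For the upper bound, the cleanest obstruction I would reach for is a fitting/width argument. Since the target is convex and every piece is convex and must lie inside it, a necessary condition is that the target's minimum width (the least distance between a pair of parallel supporting lines) be at least the minimum width of \emph{every} piece; in particular it must be at least the minimum width $w$ of the bulkiest \game piece. This single inequality immediately kills the ``thin'' candidates among the twenty, namely any whose minimum width is below $w$, because the bulky piece then fails to fit inside them in \emph{any} orientation and hence the pieces cannot tile them at all.

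The genuinely hard cases will be the remaining impossible polygons that are wide enough to admit each piece individually, so that no single global invariant rules them out. For these I would argue by finite case analysis seeded on the largest piece (or two largest pieces): fix that piece, enumerate its inequivalent placements inside the target up to the target's own symmetries, and for each placement show the uncovered region cannot be split into the remaining pieces --- for instance because a forced corner angle or a forced short boundary edge cannot be matched by any surviving piece, or because the residual region's unit-triangle tilings are incompatible with the shapes still available. Choosing the right piece to branch on and the right partial invariant to prune each branch quickly (boundary edge lengths, corner angles, or a parity/checkerboard coloring of the constituent triangles) is the crux of the argument, and I expect this enumeration to be the main obstacle; the construction and the width reduction are comparatively mechanical.
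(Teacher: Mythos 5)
Your proposal matches the paper's proof: it reduces to Wang and Hsiung's twenty candidates, exhibits explicit layouts for sixteen, and rules out the remaining four because the bulkiest piece (the area-2 trapezoid with base of length 3) cannot fit inside those ``too thin'' targets --- the paper sharpens your width argument slightly by invoking Wang and Hsiung's alignment lemma so that only four orientations of that trapezoid need be checked. The genuinely hard wide-but-impossible cases you brace for do not arise, so the heavier case analysis in your last paragraph is unnecessary.
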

\begin{proof}
We first notice that the seven puzzle pieces can be decomposed into
sixteen identical right isosceles triangles, just like the tangram.

We make use of two important results from Wang and Hsiung \cite{WH42}.
First, there are only 20 candidate convex polygons that we need to consider;
  and second, in any formable convex polygon,
  the bases of the sixteen triangles can be pairwise collinear,
  parallel, or perpendicular (\cite{WH42}, Lemma 1).
This means we only need to consider configurations that could be embedded with triangle and target polygon vertices on integer coordinates.

Sixteen convex polygons are filled as illustrated in \figurename~\ref{fig:example}.
The remaining four polygons cannot be formed since they are too thin.
More precisely, the largest trapezoid puzzle piece of area 2 has a base of length 3.
Under the four rotations we need to consider, the base of the trapezoid does not
  fit into the target polygon.
\end{proof}
\begin{figure*}[h]
  \centering
  \includegraphics[height=0.19\textheight]{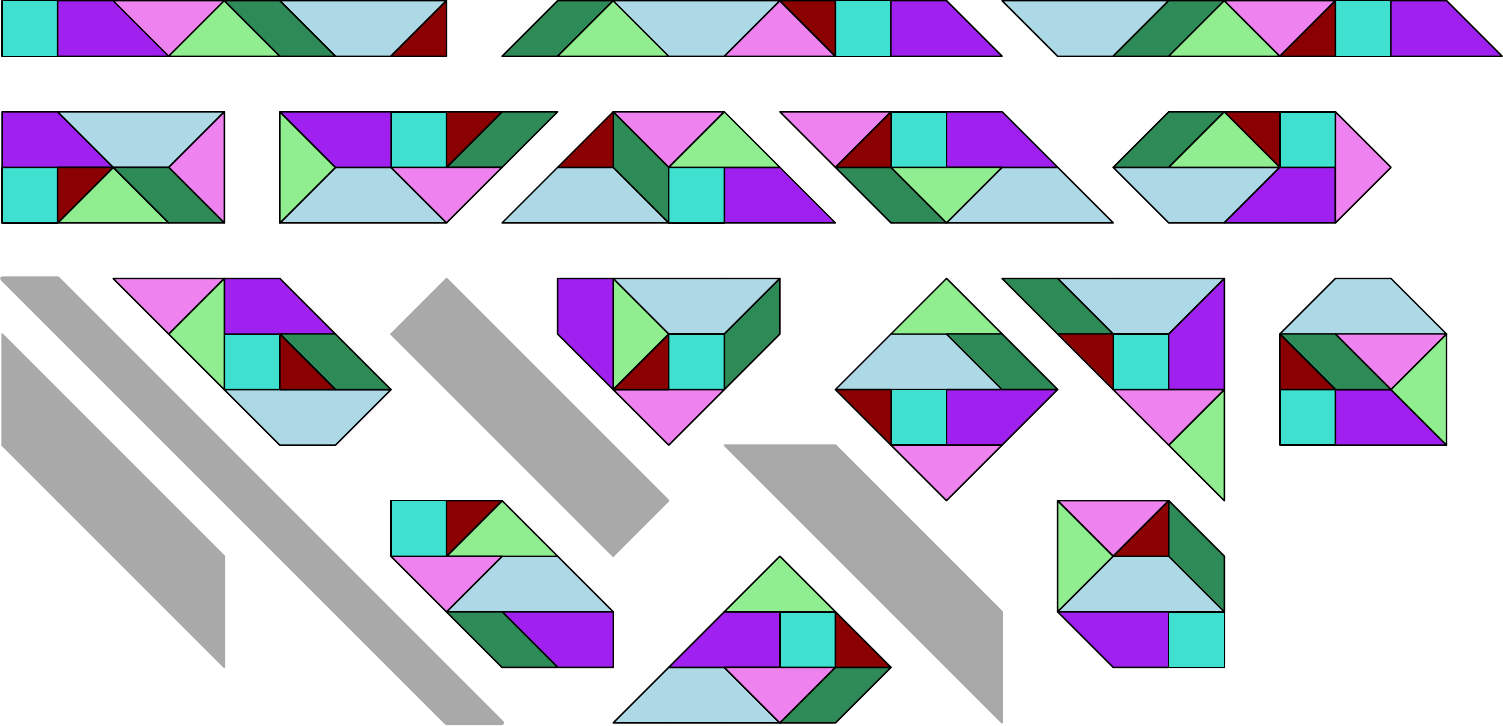}
  \caption{The sixteen convex polygons that can be formed by \game.}
  \label{fig:example}
\end{figure*}

\section{An optimal seven piece puzzle}

Although \game is more expressive than the tangram, 
\game{} is not the optimal set of seven pieces 
if one wishes to form as many convex polygons as possible.

\begin{theorem}
  There is a set of seven polygons composed from sixteen identical right isosceles triangles
  that can form nineteen distinct convex polygons.
  Furthermore, no set of seven polygons composed of sixteen identical right isosceles triangles can form 20 distinct convex polygons.
\end{theorem}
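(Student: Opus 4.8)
The plan is to treat the two assertions separately: the first is a construction, the second an impossibility, and both rest on the integer-grid reduction guaranteed by Lemma~1 of \cite{WH42}, exactly as in the proof of Theorem~\ref{thm:main}. Since every admissible placement of the sixteen triangles inside any of the twenty targets of Figure~\ref{fig:20} may be taken to have all vertices at integer coordinates, each target has only finitely many dissections into unit right triangles, and a proposed seven-piece set can be tested against all twenty targets by a finite search.

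For the first assertion I would exhibit a concrete seven-piece set engineered for flexibility rather than for resembling a classical puzzle: I would include one or two single triangles and a few two- and three-triangle pieces (so that the multiset is ``fine'' enough to fill the sharp $45^\circ$ corners and the narrow rows of the thin targets), together with one larger piece to absorb the remaining area up to sixteen triangles. With such a set in hand, realizing nineteen of the twenty polygons is a finite verification: for each target one exhibits an explicit packing, the search being small because Lemma~1 fixes the grid. The single unrealizable target is then ruled out exactly as the four were in Theorem~\ref{thm:main}, namely by a width argument---the thinnest target admits no placement of the set's widest piece in any of the orientations the grid allows.

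For the impossibility I would argue by contradiction: suppose a single seven-piece multiset $S$, with piece sizes summing to $16$, forms all twenty targets. The leverage comes from the \emph{thin} targets (the long, width-one polygons that Theorem~\ref{thm:main} already singled out). Because such a target fits inside a strip of unit width, any dissection of it places each of the seven pieces inside that strip, so \emph{every} piece of $S$ has a bounding strip of width at most $1$. I would then classify the unit-triangle shapes that fit in a width-one strip---there are only a handful of families (single triangles, unit squares, the two-triangle ``legs-$\sqrt2$'' triangle, and linear or diagonally-sheared runs of these)---and record, for each of the several thin targets \emph{and} for one or two of the rigidly tiled targets, which of these families a seven-piece dissection is forced to contain. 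The contradiction is that these forced requirements cannot be met simultaneously by one multiset of seven thin pieces whose sizes sum to $16$: the narrow targets demand several mutually incompatible thin piece-types, and meeting all of them already consumes more than seven pieces.

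The main obstacle is precisely this last, simultaneous step. Ruling out one target for a fixed set is routine, but the statement quantifies over \emph{all} seven-piece multisets at once, so the argument must pin the piece multiset down tightly enough---using the rigidity of the thin and of the cornered targets---that only finitely many candidate multisets survive, and then eliminate each. In practice this is a finite case analysis made tractable by the integer-grid reduction together with piece-size and width bookkeeping; the same bookkeeping, relaxed to eleven pieces, is what will later exhibit a universal set and thereby show eleven are sufficient, while a matching lower-bound instance shows they are necessary.
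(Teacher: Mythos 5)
Your overall strategy (exhibit a witness set for nineteen; use the thin targets to rule out twenty) matches the paper's, but both halves are left at the level of a plan exactly where the content should be. For the first assertion, the theorem is an existence claim and needs an explicit witness: the paper gives one specific seven-piece set together with nineteen explicit packings (\figurename~\ref{fig:19}). Saying you ``would include one or two single triangles and a few two- and three-triangle pieces, together with one larger piece'' does not identify a set, and not every set of this general character works --- the tangram and \game are themselves seven-piece sets built from the sixteen triangles and achieve only $13$ and $16$ respectively. The finite-search framing via the integer-grid lemma is fine, but the proof \emph{is} the witness plus its packings, and you have not produced either. (Also, nothing in the first assertion requires you to rule out the twentieth target; forming nineteen suffices.)

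For the impossibility, you have the right raw material (the thin targets) but stop exactly at the decisive step, as you yourself concede. The missing idea is a counting argument: seven pieces comprising sixteen triangles must, by pigeonhole, contain a piece of at least three triangles, and the point (caption of \figurename~\ref{fig:skinny}) is that any such piece that fits inside the thin parallelogram~(a) cannot fit inside shape~(b); the two thin targets alone are therefore incompatible for \emph{every} seven-piece set, with no enumeration of multisets required. (The paper actually derives the seven-piece claim from Theorem~\ref{thm:eleven-pieces}: even ten pieces fail, because the thin targets force at least six $1\times\sqrt2$ parallelograms and at most four single triangles, and six such parallelograms cannot pack into the $2\sqrt2$-sidelength square.) Your observation that every piece fits in a width-one strip is not by itself contradictory --- single triangles and unit parallelograms fit in all the thin targets --- so without the size pigeonhole the ``mutually incompatible demands'' you invoke never materialize, and the quantification over all seven-piece multisets is never discharged. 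As written, the impossibility half is a conjecture that some finite case analysis will close, not a proof.
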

\begin{proof}
The set of seven polygons that can form nineteen distinct convex polygons and
its formations are depicted in \figurename~\ref{fig:19}.
Theorem~\ref{thm:eleven-pieces} implies that no seven-piece puzzle can form all 20 convex polygons.
\end{proof}
\begin{figure*}[h]
  \centering
  \includegraphics[height=0.19\textheight]{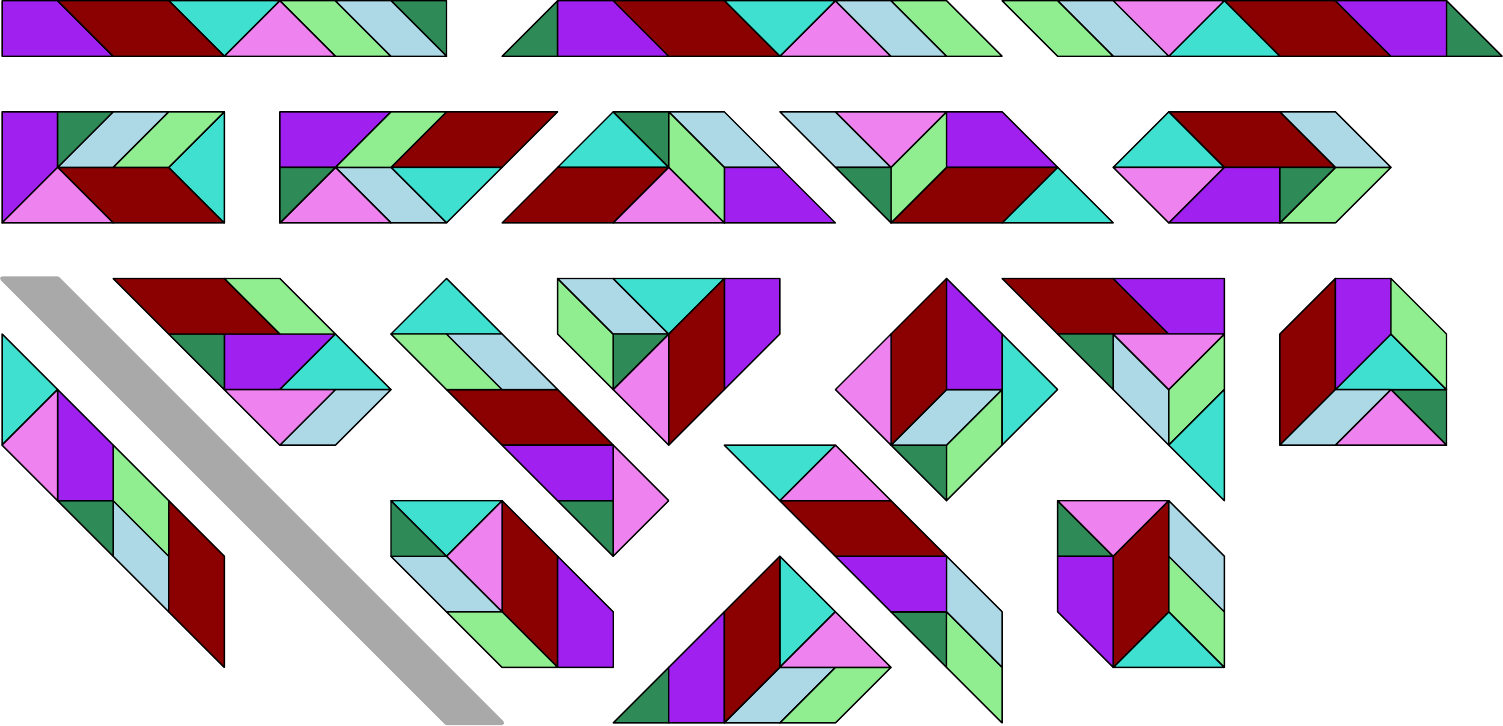}
  \caption{Seven pieces forming nineteen convex polygons.}
  \label{fig:19}
\end{figure*}
  

  \begin{figure}[h]
    \centering
    \includegraphics[width=0.3\textwidth]{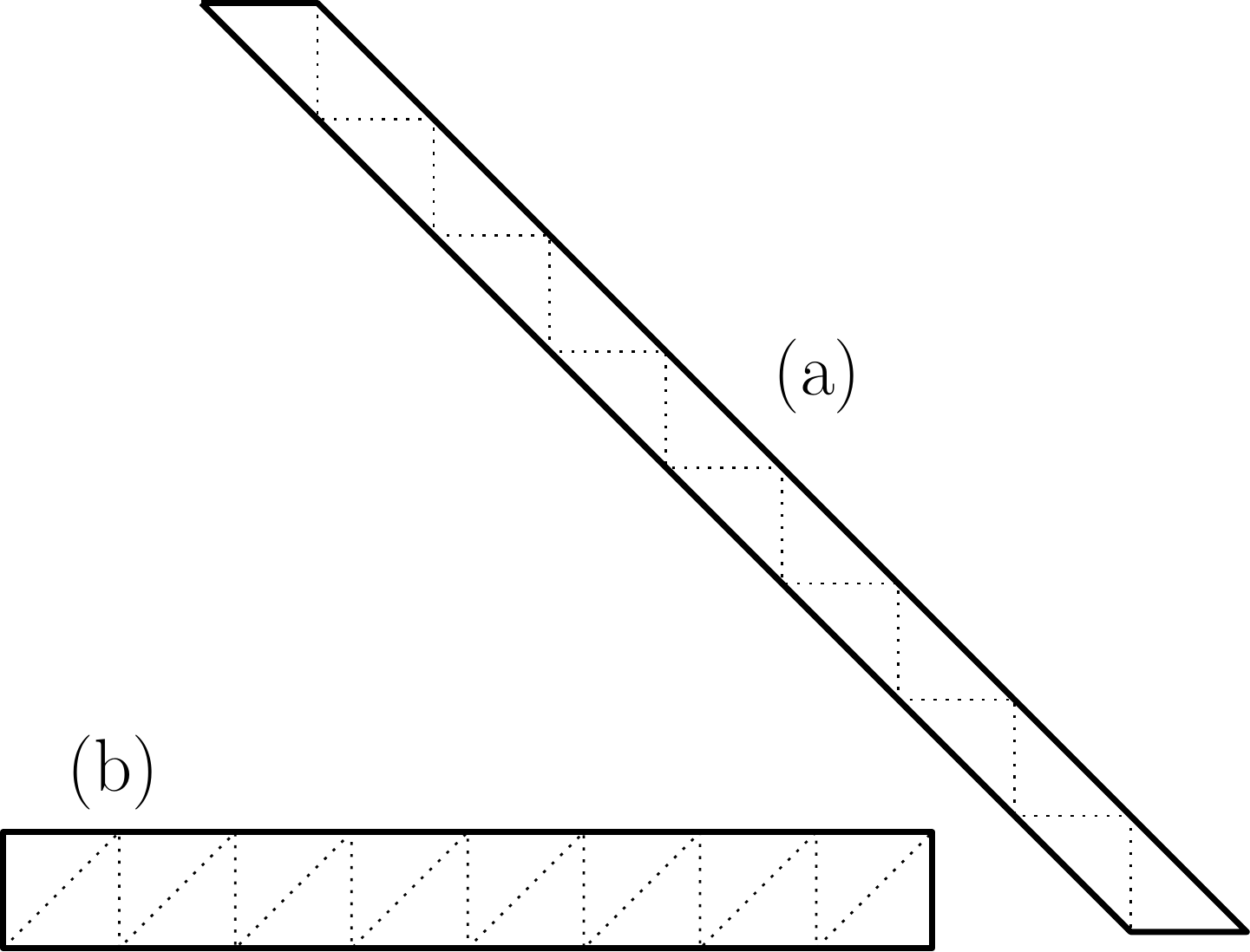}
    \caption{Any set of 7 pieces covering shape (a)
   must have a piece that consists of at least 3 triangles, 
   which cannot be covered by shape~(b).}
    \label{fig:skinny}
  \end{figure}

\section{Beyond seven pieces}

The next natural question to ask is how many pieces built from sixteen identical isosceles right triangles might one need in order to form all 20 convex polygons.

\begin{theorem}
Ten or fewer pieces formed from sixteen identical isosceles right triangles
cannot form 20 convex polygons.
However, eleven pieces can.
  \label{thm:eleven-pieces}
\end{theorem}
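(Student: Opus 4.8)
The plan is to prove the two halves separately: a construction for the upper bound, and an extremal‑shape argument for the lower bound.

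For the upper bound I would exhibit an explicit set of eleven polyabolos (pieces glued from the sixteen triangles, for instance six single triangles together with five two‑triangle pieces) and verify directly that it forms each of the twenty convex polygons, displaying the twenty dissections in a figure in the same style as \figurename~\ref{fig:example} and \figurename~\ref{fig:19}. The design principle is to keep every piece small enough to fit inside even the thinnest targets while still assembling into the fat ones; once such a set is drawn, correctness is a finite check by inspection.

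For the lower bound I would work on the integer grid guaranteed by Wang and Hsiung's Lemma~1, so that every piece is a polyabolo and the sixteen triangles are fixed. The main tool is the minimal width of a convex piece: a single triangle has width $\sqrt2/2$, whereas the unit square and the two‑triangle ``big triangle'' have width $1$, and to tile a thin target every piece placed in it must fit inside it. I would isolate the extremal thin targets among the twenty, namely the width‑$\sqrt2/2$ diagonal family and the width‑$1$ rectilinear family (cf.\ the shapes in \figurename~\ref{fig:skinny}), and argue that the requirement of tiling these simultaneously, together with the requirement of tiling a ``fat'' target such as a square, confines the admissible pieces to a short catalogue of small shapes and caps how many triangles a single admissible piece can usefully absorb. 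Counting the sixteen triangles against this cap then forces at least eleven pieces.

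The hard part is closing the gap between ten and eleven. The naive pigeonhole used for seven pieces, whereby covering a thin shape with $n$ pieces forces some piece of $\lceil 16/n\rceil$ triangles, which is then too thick for a second thin shape, only rules out $n\le 7$: for $n=10$ it merely produces a two‑triangle piece, and the two‑triangle diagonal parallelogram (minimal width $\sqrt2/2$) is thin enough to survive both thin targets. Ruling out $n=8,9,10$ therefore requires the genuinely combined analysis above. One must show that the long diagonal parallelograms which are cheap for the thin targets cannot help tile the fat targets, so that the shared piece set is driven down to single triangles and the one small parallelogram, and then that no arrangement of ten or fewer such pieces tiles all twenty polygons at once. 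I expect this casework, rather than the construction, to be the crux, and it is where a careful enumeration of placements and rotations is unavoidable.
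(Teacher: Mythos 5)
Your overall strategy matches the paper's. The construction is the same (five $1\times\sqrt2$ parallelograms plus six single triangles, verified by exhibiting all twenty dissections as in \figurename~\ref{fig:11}), and your lower-bound skeleton --- use two thin targets to force every multi-triangle piece to be a $1\times\sqrt2$ parallelogram, then count triangles against pieces --- is exactly the paper's first step: with at most ten pieces, $s$ singles and $p$ parallelograms satisfy $s+p\le 10$ and $s+2p=16$, forcing $p\ge 6$ and $s\le 4$. You are also right that the naive pigeonhole from the seven-piece case stops working at $n=8$.

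There is, however, a genuine gap at exactly the point you flag as the crux. You assert that one must then show that ``no arrangement of ten or fewer such pieces tiles all twenty polygons at once,'' but you never identify the obstruction, and without it the argument does not close: a piece set confined to single triangles and $1\times\sqrt2$ parallelograms is perfectly consistent with having only ten pieces, so confinement alone proves nothing. The paper's missing ingredient is a specific fat target, the square of side $2\sqrt2$. That square has exactly eight triangles incident to its perimeter; the at most four single triangles can account for at most four of them, so the six parallelograms must cover at least four boundary-incident triangles, and a short exhaustive packing analysis (\figurename~\ref{fig:fitting-in-square}) shows that every such placement strands a central square and boundary triangles into which no further parallelogram fits. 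Until you name such a target and carry out that finite case analysis (or an equivalent counting or coloring argument), the step from ``the pieces are small'' to ``ten pieces do not suffice'' remains unproved.
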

\begin{proof}
  In the negative direction, observe that to form the $1 \times 8\sqrt 2$ parallelogram in \figurename~\ref{fig:skinny}~(a) with ten pieces, there must be at least
six $1 \times \sqrt 2$ parallelograms and at most four single triangles (larger pieces all contain a parallelogram
  and do not fit within the shape of \figurename~\ref{fig:skinny}~(b)).

  Consider the $2\sqrt 2$-sidelength square.
  The perimeter has 8 incident triangles, so the six
  parallelograms would have to cover at least four of those.
  Exhaustive case analysis, as seen in \figurename~\ref{fig:fitting-in-square},
    shows that all arrangements that cover enough of the exterior triangles
    leave a square in the middle and boundary triangles that cannot fit a single parallelogram.

    \begin{figure}[h]
      \centering
      \includegraphics[scale=0.7]{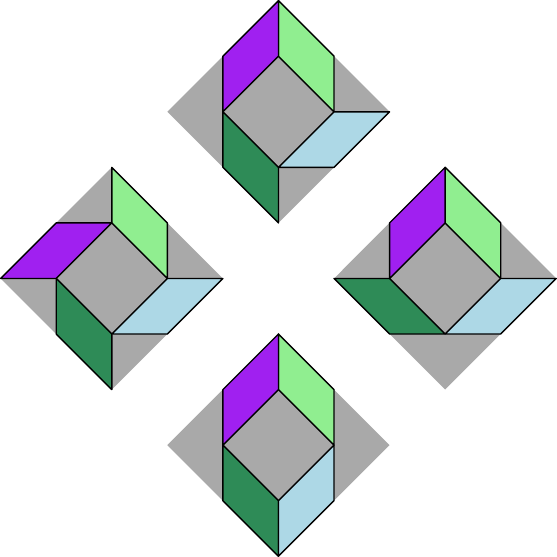}
      \caption{Six parallelograms do not fit in a square.}
      \label{fig:fitting-in-square}
    \end{figure}

  We observe that five $1 \times \sqrt 2$ parallelograms can fit inside each of the
  20 shapes: see \figurename~\ref{fig:11}.
  So, these parallelograms along with six single triangles can realize
  all 20 convex polygons. 
\end{proof}
\begin{figure*}[h]
  \centering
  \includegraphics[height=0.19\textheight]{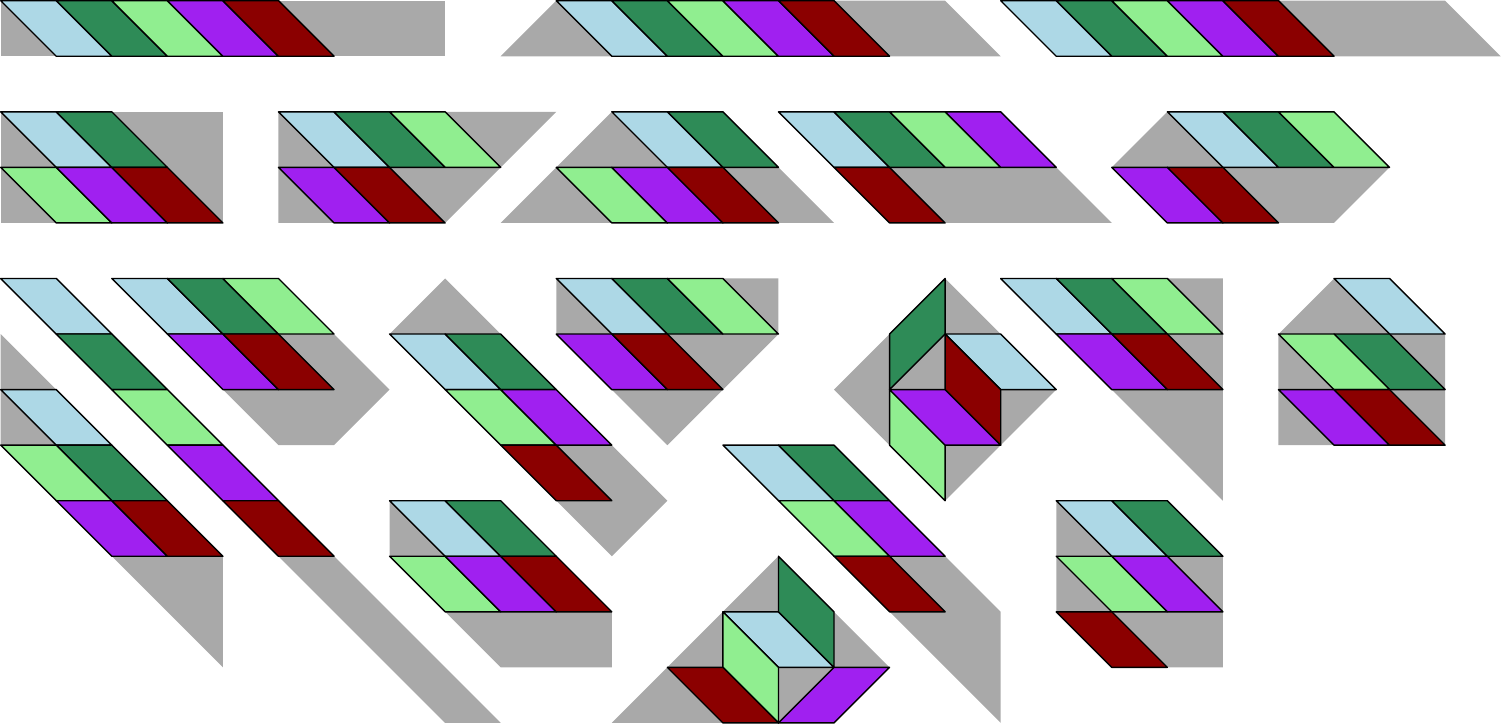}
  \caption{Eleven pieces forming all twenty convex polygons (the six individual isosceles right triangles not shown).}
  \label{fig:11}
\end{figure*}

\section{Concluding remarks}

Sixteen identical right isosceles triangles can form twenty convex polygons.
We compare the power of expression of some classic dissection puzzles constructed from
  these triangles.
The ``difficulty'' of a dissection puzzle for people to solve can be estimated by the number of ways in which one can solve it.
Computing these numbers efficiently remains a compelling task for future work.


  \begin{figure}
    \centering
    \includegraphics[width=0.48\textwidth]{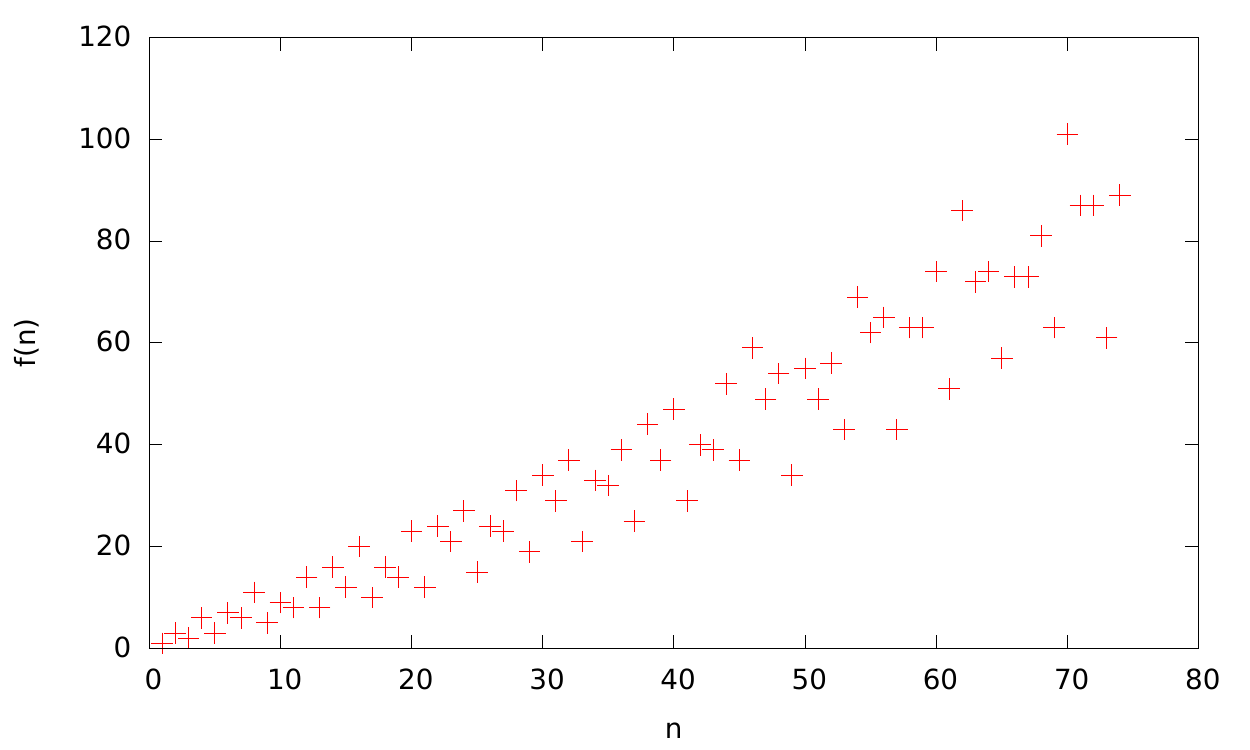}
    \caption{The number $f(n)$ of convex polygons formed by
             $n$ identical right isosceles triangles.}
    \label{fig:num}
  \end{figure}

Another interesting direction of study is the number of convex polygons formed by different numbers of triangles.
Let $f(n)$ be the number of formable convex polygons formed by $n$ identical right isosceles triangles.
To analyze the tangram and \game puzzles,
the value $f(16)=20$ plays an important role.
If we design larger puzzles,
it is natural to consider the number of formable polygons.
The function $f(n)$ itself is also interesting to investigate.
The values of the function presented in \figurename~\ref{fig:num} were determined by computer search:
all potential side length assignment to octogons were considered 
(for convex, simple arragements of identical right isosceles triangles, the interior angles are at most $\frac{3\pi}{4}$; polygons with more than 8 sides have average exterior angle strictly less than $\pi/4$).
Although it is not monotone ($f(1)=1, f(2)=3$, and $f(3)=2$),
it is a generally increasing function.
Trivially, for all $x \geq 0$, we have~$f(x) < f(2x)$ as one can subdivide every triangle into two to get the same number. The inequality's strictness comes from a new, skinnier parallelogram with side lengths 1 and $x \sqrt 2$.

\section*{Acknowledgements}
We are especially grateful to the National Science Foundation's East Asia
  and Pacific Summer Institutes and the Japan Society for the Promotion
  of Science Summer Program for encouraging international collaboration.

\bibliographystyle{alpha}
\bibliography{main}

\end{document}